\def\BibTeX{{\rm B\kern-.05em{\sc i\kern-.025em b}\kern-.08em
    T\kern-.1667em\lower.7ex\hbox{E}\kern-.125emX}}
\newcolumntype{C}{>{\centering\arraybackslash}X} 
\newcolumntype{b}{>{\hsize=2.3\hsize}X}
\theoremstyle{plain}
\newtheorem{theorem}{Theorem}
\newtheorem{lemma}{Lemma}
\newtheorem{corollary}{Corollary}
\theoremstyle{definition}
\newtheorem{dfn}{Definition}
\theoremstyle{remark}
\newtheorem{remark}{Remark}
\newcommand{\X}{\mathcal{X}}
\newcommand{\Y}{\mathcal{Y}}
\newcommand{\Z}{\mathcal{Z}}
\newcommand{\Pm}{\mathcal{P}}
\newcommand{\Q}{\mathcal{Q}}
\newcommand{\F}{\mathcal{F}}
\newcounter{labelcnt}
\renewcommand{\thelabelcnt}{(\alph{labelcnt})}
\newcommand{\setlabel}[1]{%
  \refstepcounter{labelcnt}\ltx@label{lbl:#1}%
  {\text{\upshape\thelabelcnt}}%
}
\newcommand{\reflabel}[1]{\text{\upshape\ref{lbl:#1}}}
\DeclareMathOperator*{\esssup}{ess\,sup}
\newcommand{\ml}[2]{\mathcal{L}\left(#1  \!\!  \to  \!\!   #2\right)} 
\begin{document}

\title{On conditional Sibson's $\alpha$-Mutual Information}

\author{
\IEEEauthorblockN{Amedeo Roberto Esposito, Diyuan Wu, Michael Gastpar}
\IEEEauthorblockA{\textit{School of Computer and Communication Sciences} \\
       EPFL, Lausanne, Switzerland\\
\{amedeo.esposito, diyuan.wu, michael.gastpar\}@epfl.ch}
}

\maketitle

\begin{abstract}
   In this work, we analyse how to define a conditional version of Sibson's $\alpha$-Mutual Information. Several such definitions can be advanced and they all lead to different information measures with different (but similar) operational meanings. We will analyse in detail one such definition, compute a closed-form expression for it and endorse it with an operational meaning while also considering some applications. The alternative definitions will also be mentioned and compared.
\end{abstract}

\begin{IEEEkeywords}
R\'enyi-Divergence, Sibson's Mutual Information, Conditional Mutual Information, Information Measures
\end{IEEEkeywords}

\section{Introduction}
Sibson's $\alpha$-Mutual Information is a generalization of Shannon's Mutual Information with several applications in probability, information and learning theory \cite{fullVersionGeneralization}. In particular, it has been used to provide concentration inequalities in settings where the random variables are \textbf{not} independent, with applications to learning theory \cite{fullVersionGeneralization}. The measure is also connected to Gallager's exponent function, a central object in the channel coding problem both for rates below and above capacity \cite{gallager1,gallager2}. Moreover, a new operational meaning has been given to the measure with $\alpha\!=\!+\infty$ when a novel measure of information leakage has been proposed in \cite{leakageLong}, under the name of Maximal leakage. 
Similarly to $I_\alpha$, Maximal Leakage has recently found applications in learning and probability theory \cite{fullVersionGeneralization}. Howerever, while Maximal Leakage has a corresponding conditional form \cite{leakageLong}, Sibson's $\alpha$-Mutual Information lacks an agreed upon conditional version. In this work we analyse a path that could be taken in defining such a measure and will focus on one specific choice, given in Definition \ref{conditionalSibs} below. We discuss key properties of this choice and endow it with an operational meaning as the error-exponent in a properly defined hypothesis testing problem. Moreover, we hint at some application of this measure to other settings as well. The choice we make is not unique and we will explain how making different choices leads to different information measures, all of them equally meaningful. A conditional version of Sibson's $I_\alpha$ has been presented in~\cite{tomamichel}. We briefly present their measure in Sec. \ref{sec-otherCond} along with a new result that we believe to be of interest. We then present in Sec. \ref{sec-newCond} a different choice for conditional $I_\alpha$. We show some properties of this measure, compare the two objects in Sec. \ref{sec-discussion} and then discuss a general approach to associate an operational meaning to these measures in Sec. \ref{sec-opMeaning}.
Alternative routes have been considered in~\cite{alphaLeakage} where Arimoto's generalisation of the Mutual Information has been considered and a conditional version has been given.
\section{Background and definitions}
Given a function $f:\mathbb{R}\to [-\infty,+\infty]$ we can define its convex conjugate $f^\star:\mathbb{R}\to [-\infty,+\infty]$ as follows:
\begin{equation}
    f^\star(\lambda) = \sup_x(\lambda x-f(x)).
\end{equation}
Given a function $f$, $f^\star$ is guaranteed to be lower semicontinuous and convex. We can re-apply the conjugation operator to $f^\star$ and obtain $f^{\star\star}$. If $f$ is convex and lower semincontinuous then $f=f^{\star\star}$, otherwise all we can say is that $\forall x\in \mathbb{R}\,\,f^{\star\star}(x)\leq f(x).$  $\log$ denotes the natural logarithm.
\subsection{Sibson's $\alpha$-Mutual Information}
Introduced by R\'enyi as a generalization of entropy and KL-divergence, $\alpha$-divergence has found many applications ranging from hypothesis testing to guessing and several other statistical inference and coding problems~\cite{verduAlpha}. Indeed, it has several useful operational interpretations (e.g., hypothesis testing, and the cut-off rate in block coding \cite{RenyiKLDiv,opMeanRDiv1}). It can be defined as follows~\cite{RenyiKLDiv}.
\begin{dfn}
	Let $(\Omega,\F,\Pm),(\Omega,\F,\Q)$ be two probability spaces. Let $\alpha>0$ be a positive real number different from $1$. Consider a measure $\mu$ such that $\Pm\ll\mu$ and $\Q\ll\mu$ (such a measure always exists, e.g. $\mu=(\Pm+\Q)/2$)) and denote with $p,q$ the densities of $\Pm,\Q$ with respect to $\mu$. The $\alpha$-Divergence of $\Pm$ from $\Q$ is defined as follows:
	\begin{align}
	D_\alpha(\Pm\|\Q)=\frac{1}{\alpha-1} \log \int p^\alpha q^{1-\alpha} d\mu.
	\end{align}
\end{dfn}
\begin{remark}
    The definition is independent of the chosen measure $\mu$. 
    It is indeed possible to show that
    $\int p^{\alpha}q^{1-\alpha} d\mu = \int \left(\frac{q}{p}\right)^{1-\alpha}d\Pm $, and that whenever $\Pm\ll\Q$ or $0<\alpha<1,$ we have $\int p^{\alpha}q^{1-\alpha} d\mu= \int \left(\frac{p}{q}\right)^{\alpha}d\Q$, see \cite{RenyiKLDiv}.
\end{remark}

It can be shown that if $\alpha>1$ and $\Pm\not\ll\Q$ then $D_\alpha(\Pm\|\Q)=\infty$. The behaviour of the measure for $\alpha\in\{0,1,\infty\}$ can be defined by continuity. In general, one has that $D_1(\Pm\|\Q) = D(\Pm\|\Q)$ but if $D(\Pm\|\Q)=\infty$ or there exists $\beta$ such that $D_\beta(\Pm\|\Q)<\infty$ then $\lim_{\alpha\downarrow1}D_\alpha(\Pm\|Q)=D(\Pm\|\Q)$\cite[Theorem 5]{RenyiKLDiv}. For an extensive treatment of $\alpha$-divergences and their properties we refer the reader to~\cite{RenyiKLDiv}. 
Starting from R\'enyi's Divergence and the geometric averaging that it involves, Sibson built the notion of Information Radius \cite{infoRadius}:
\begin{dfn}\label{SibsonsInfoRadius}
    Let $(\mu_1,\ldots,\mu_n)$ be a family of probability measures and $(w_1,\ldots, w_n)$ be a set of weights s.t. $w_i\geq 0$ for $i=1,\ldots,n$ and such that $\sum_{i=1}^n w_i>0$. Let $\alpha\geq1$, the information radius of order $\alpha$ is defined as:
    \begin{align*}
        \frac{1}{\alpha-1}\min_{\nu\ll\sum_iw_i\mu_i}\log\left(\sum_i w_i\exp((\alpha-1)D_\alpha(\mu_i\|\nu))\right)  \label{infoRadius}.
    \end{align*}
\end{dfn}
Suppose now we have two random variables $X,Y$ jointly distributed according to $\Pm_{XY}$. It is possible to generalise Def. \ref{SibsonsInfoRadius} and see that the information radius is a special case of the following quantity \cite{verduAlpha}:
\begin{equation}
    I_\alpha(X,Y) = \min_{\Q_Y} D_\alpha(\Pm_{XY}\|\Pm_{X}\Q_Y). \label{sibsIAlpha} 
\end{equation}
$I_\alpha(X,Y)$ represents a generalisation of Shannon's Mutual Information and possesses many interesting properties \cite{verduAlpha}. Indeed, $\lim_{\alpha\to 1}I_\alpha(X,Y)=I(X;Y)$. 
On the other hand when $\alpha\to\infty$, we get: $I_\infty(X,Y)=\log\mathbb{E}_{\Pm_Y}\left[\sup_{x:\Pm_X(x)>0} \frac{\Pm_{XY}(\{x,Y\})}{\Pm_X(\{x\})\Pm_Y(\{Y\})}\right] =\ml{X}{Y}$, where $\ml{X}{Y}$ denotes the Maximal Leakage from $X$ to $Y$, a recently defined information measure with an operational meaning in the context of privacy and security \cite{leakageLong}.
For more details on Sibson's $\alpha$-MI, as well as a closed-form expression, we refer the reader to~\cite{verduAlpha}, as for Maximal Leakage the reader is referred to~\cite{leakageLong}.
\section{Definition}
\subsection{Introduction}\label{Sec-Def-subsec-intro}
The characterisation expressed in \eqref{sibsIAlpha} represents the foundation of this work. Indeed, using \eqref{sibsIAlpha} as the definition of Sibson's $\alpha$-MI allows us to draw parallels with Shannon's Mutual Information. This, in turn, allows us to define, drawing inspiration from Shannon's measures, an analogous conditional version of Sibson's $I_\alpha$. It is very well known that $I(X;Y)=D(\Pm_{XY}\|\Pm_{X}\Pm_{Y})$ as well as $I(X;Y|Z)=D(\Pm_{XYZ}\|\Pm_Z\Pm_{X|Z}\Pm_{Y|Z})$. We can thus follow a similar approach in defining a conditional $\alpha$-Mutual Information: we will estimate the (R\'enyi's) divergence of the joint $\Pm_{XYZ}$ from a distribution characterised by the Markov chain $X-Z-Y$ via $\alpha$-Divergences. Mimicking \eqref{sibsIAlpha} we will also minimise such divergence with respect to a family of measures. Having now three random variables, we can think of three natural factorisations for $\Pm_{XYZ}$ (assuming that $X-Z-Y$ holds): $\Pm_X\Pm_{Z|X}\Pm_{Y|Z}$, $\Pm_Y\Pm_{Z|Y}\Pm_{X|Z}$, $\Pm_Z\Pm_{Y|Z}\Pm_{X|Z}$. The question then is: which measure should we minimise with respect to, in order to define $I_\alpha(X,Y|Z)$? Natural candidates seem to be the minimisations with respect to $\Q_Z$,$\Q_{Y|Z}$ and $\Q_Y$.
The matter is strongly connected to the operational meaning that the information measure acquires, alongside with the applications it can provide. Each of the definitions can be useful in specific settings. Keeping this in mind, the purpose of this work is not to compare different definitions in order to find the best one but rather to highlight properties of the different definitions with an operationally driven approach. Each of these measures can be associated to a hypothesis testing problem and a bound relating different measures of the same event (typically a joint and a Markov chain-like distribution). Different applications require different conditional $I_\alpha$'s. 
With this drive, let us make a specific choice for the minimisation and draw a parallel with the others along the way. The random variable whose measure\footnote{It is clearly possible to minimise over more than one random variable at once, like it has been done in \cite{lapidothTesting,tomamichel} in the context of both regular $I_\alpha(X,Y)$ and conditional $I_\alpha(X,Y|Z)$. } we choose to minimise will be denoted as a superscript.
\subsection{$I^{Y|Z}_\alpha(X,Y|Z)$}\label{sec-otherCond}
In~\cite{tomamichel}, conditional $\alpha$-mutual information was defined as follows:
\begin{dfn}\label{otherCondIAlpha}
   Let $X,Y,Z$ be three random variables jointly distributed according to $\Pm_{XYZ}$. For $\alpha>0$, a conditional Sibson's mutual information of order $\alpha$ between $X$ and $Y$ given $Z$ is defined as:
\begin{equation}
    I^{Y|Z}_\alpha(X,Y|Z) = \min_{\Q_{Y|Z}} D_\alpha(\Pm_{XYZ}\|\Pm_{X|Z}\Q_{Y|Z}\Pm_Z).
\end{equation}
\end{dfn}
It is possible to find a closed-form expression for Def. \ref{otherCondIAlpha} \cite[Section IV.C.2]{tomamichel}.
This definition is interesting as setting $Z$ equal to a constant allows us to retrieve $I_\alpha(X,Y)$. Moreover, 
starting from Definition \ref{otherCondIAlpha} and its closed-form expression one can retrieve the following result.
\begin{theorem}\label{probBound2}
Let $(\X\times\Y\times \Z,\F,\Pm_{XYZ})$ be a probability space.
Let $\Pm_Z$ and $\Pm_{X|Z}$ be the induced conditional and marginal distributions.
Assume that $\Pm_{XYZ}\ll\Pm_Z\Pm_{Y|Z}\Pm_{X|Z}.$
Given $E\in \F$ and $z\in\Z, y\in\Y$, let $E_{z,y} = \{x : (x,y,z)\in E\}$. Then, fixed $\alpha\geq1$:
\begin{align}
    \Pm_{XYZ}(E) \leq &\mathbb{E}_{Z}\left[\esssup_{\Pm_{Y|Z}}\Pm_{X|Z}(E_{Z,Y})\right]^{\frac{\alpha-1}{\alpha}} \notag \\  & \cdot\exp\left(\frac{\alpha-1}{\alpha}I_\alpha^{Y|Z}(X,Y|Z)\right).\label{probBoundEq2}
\end{align}
\end{theorem}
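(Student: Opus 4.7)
The strategy is to apply a standard Hölder/change-of-measure step based on the Rényi divergence, using as reference measure the product $\Pm_{X|Z}\Q^\star_{Y|Z}\Pm_Z$ where $\Q^\star_{Y|Z}$ achieves the minimum in Definition~\ref{otherCondIAlpha}, and then to bound the probability of $E$ under this reference by the essential supremum appearing in~\eqref{probBoundEq2}.

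Concretely, I would fix the minimizer $\Q^\star_{Y|Z}$ and note that $D_\alpha(\Pm_{XYZ}\|\Pm_{X|Z}\Q^\star_{Y|Z}\Pm_Z) = I_\alpha^{Y|Z}(X,Y|Z)$ by construction. The finiteness of this minimum (ensured by the standing absolute continuity hypothesis together with $\alpha\geq 1$) forces $\Pm_{XYZ}\ll \Pm_{X|Z}\Q^\star_{Y|Z}\Pm_Z$; writing $f$ for the corresponding Radon-Nikodym derivative and applying Hölder's inequality to $\Pm_{XYZ}(E) = \int_E 1\cdot f\,d(\Pm_{X|Z}\Q^\star_{Y|Z}\Pm_Z)$ with conjugate exponents $\alpha$ and $\alpha/(\alpha-1)$ yields
\begin{align*}
    \Pm_{XYZ}(E) &\leq (\Pm_{X|Z}\Q^\star_{Y|Z}\Pm_Z(E))^{(\alpha-1)/\alpha}\left(\int f^\alpha d(\Pm_{X|Z}\Q^\star_{Y|Z}\Pm_Z)\right)^{1/\alpha} \\
    &= (\Pm_{X|Z}\Q^\star_{Y|Z}\Pm_Z(E))^{(\alpha-1)/\alpha}\exp\!\left(\tfrac{\alpha-1}{\alpha}I_\alpha^{Y|Z}(X,Y|Z)\right).
\end{align*}
Unfolding the product, $\Pm_{X|Z}\Q^\star_{Y|Z}\Pm_Z(E) = \E_Z\!\left[\int \Pm_{X|Z}(E_{Z,y})\,d\Q^\star_{Y|Z}(y|Z)\right]$, which is in turn bounded above by $\E_Z\!\left[\esssup_{\Q^\star_{Y|Z}} \Pm_{X|Z}(E_{Z,Y})\right]$.

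The subtle step — and the main obstacle I anticipate — is replacing the essential supremum with respect to $\Q^\star_{Y|Z}$ by the essential supremum with respect to $\Pm_{Y|Z}$. This amounts to establishing $\Q^\star_{Y|Z}\ll \Pm_{Y|Z}$, since then every $\Pm_{Y|Z}$-null set is $\Q^\star_{Y|Z}$-null and hence $\esssup_{\Q^\star_{Y|Z}} \leq \esssup_{\Pm_{Y|Z}}$. The cleanest way to verify this is to inspect the closed-form optimizer from \cite{tomamichel}: the minimizer is (up to normalization) $q^\star_{Y|Z}(y|z)\propto \left(\int p_{XY|Z}(x,y|z)^\alpha p_{X|Z}(x|z)^{1-\alpha}\,d\mu_X\right)^{1/\alpha}$, and the hypothesis $\Pm_{XYZ}\ll \Pm_Z\Pm_{Y|Z}\Pm_{X|Z}$ implies that $p_{Y|Z}(y|z)=0$ forces $p_{XY|Z}(\cdot,y|z)\equiv 0$ $\mu_X$-a.e., so the integrand vanishes and $q^\star_{Y|Z}(y|z)=0$, as required. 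Chaining the estimates then gives~\eqref{probBoundEq2}.
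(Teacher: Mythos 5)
Your proof is correct, but it takes a genuinely different route from the paper's. The paper never touches the optimizing kernel: it writes $\Pm_{XYZ}(E)$ as an expectation of $\frac{d\Pm_{XYZ}}{d\Pm_Z\Pm_{Y|Z}\Pm_{X|Z}}\mathbbm{1}_E$ under the fixed reference $\Pm_Z\Pm_{Y|Z}\Pm_{X|Z}$ and applies H\"older three times in a nested fashion (once per layer $Z$, $Y|Z$, $X|Z$, with exponents $\alpha'',\alpha',\alpha$ and conjugates $\gamma'',\gamma',\gamma$), then specializes to $\alpha''=\alpha$, $\alpha'=1$, $\gamma'\to\infty$; this simultaneously produces the term $\mathbb{E}_Z[\esssup_{\Pm_{Y|Z}}\Pm_{X|Z}(E_{Z,Y})]^{(\alpha-1)/\alpha}$ and the nested-norm expression that is identified with $\exp(\frac{\alpha-1}{\alpha}I_\alpha^{Y|Z})$ via the closed form of Definition~\ref{otherCondIAlpha}. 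You instead apply a single H\"older step at the minimizer $\Q^\star_{Y|Z}$, which immediately yields the divergence factor from the variational definition, but then requires the extra step $\Q^\star_{Y|Z}\ll\Pm_{Y|Z}$ to upgrade $\esssup_{\Q^\star_{Y|Z}}$ to $\esssup_{\Pm_{Y|Z}}$ --- you correctly identified this as the crux and your verification via the Sibson-identity form of the optimizer is sound (and ultimately uses the same external fact, the closed form from \cite{tomamichel}, that the paper uses to identify its nested norm). Two minor remarks: your argument implicitly assumes $\alpha>1$ (at $\alpha=1$ the claimed bound is trivially $\Pm_{XYZ}(E)\le 1$, so nothing is lost), and the standing absolute-continuity hypothesis does not by itself guarantee finiteness of $I_\alpha^{Y|Z}$ outside the finite discrete case --- but when it is infinite the bound is vacuous, so this is harmless. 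Your route is arguably more transparent, at the cost of needing existence and the explicit form of the minimizer; the paper's route avoids the optimizer entirely and its general-exponent nested H\"older is the template reused for Theorem~\ref{probBound}.
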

\begin{proof}
\begin{align}
    \Pm_{XYZ}(E) =\, &\mathbb{E}_{\Pm_Z\Pm_{Y|Z}\Pm_{X|Z}}\left[\frac{d\Pm_{XYZ}}{d\Pm_Z\Pm_{Y|Z}\Pm_{X|Z}} \mathbbm{1}_E\right] \\
    \leq\, &\mathbb{E}_{\Pm_Z}^{\frac{1}{\alpha''}}\left[\mathbb{E}_{\Pm_{Y|Z}}^\frac{\alpha''}{\alpha'}\left[\mathbb{E}_{\Pm_{X|Z}}^{\frac{\alpha'}{\alpha}}\left[ \left(\frac{d\Pm_{XYZ}}{d\Pm_Z\Pm_{Y|Z}\Pm_{X|Z}}\right)^{\alpha}\right]\right]\right] \notag \\
    \cdot\,  &\mathbb{E}_{\Pm_Z}^\frac{1}{\gamma''}\left[\mathbb{E}_{\Pm_{Y|Z}}^\frac{\gamma''}{\gamma'}\left[\mathbb{E}_{\Pm_{X|Z}}^{\frac{\gamma'}{\gamma}}\left[\mathbbm{1}_E^{\gamma}\right]\right]\right] \\
    \leq\, &\mathbb{E}_{Z}\left[\esssup_{\Pm_{Y|Z}}\Pm_{X|Z}(E_{Z,Y})\right]^{\frac{\alpha-1}{\alpha}} \notag \\  & \cdot \exp\left(\frac{\alpha-1}{\alpha}I_\alpha^{Y|Z}(X,Y|Z)\right). \label{statementThm1}
\end{align}
The first inequality follows from applying H\"older's inequality three times and the six parameters are such that $\frac{1}{\alpha''}+\frac{1}{\gamma''}=\frac{1}{\alpha'}+\frac{1}{\gamma'}=\frac{1}{\alpha}+\frac{1}{\gamma}=1.$ \eqref{statementThm1} follows from setting $\alpha''=\alpha$ and $\alpha'=1$ which imply $\gamma''=\gamma$ and  $\gamma'\to\infty$.
\end{proof}
Another property of $I_\alpha^Z$ is that, similarly to unconditional $I_\alpha$ \cite{leakageLong}, taking the limit of $\alpha\to\infty$, we have that $I_\alpha^{Y|Z}(X,Y|Z)\!\!\underset{\alpha\to\infty}{\to}\!\! \mathcal{L}(X\!\!\to\!\!Y|Z),$
leading us to the following:
\begin{corollary}
Under the same assumptions of Theorem \ref{probBound2}:
\begin{align}
   \Pm_{XYZ}(E) \leq &\mathbb{E}_{Z}[\esssup_{\Pm_{Y|Z}}\Pm_{X|Z}(E_{Z,Y})] \exp\left(\mathcal{L}(X\!\! \to \!\!Y|Z)\right) \label{probBoundLeakageEq}.
\end{align}
\end{corollary}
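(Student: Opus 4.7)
The plan is to derive the corollary by sending $\alpha\to\infty$ in the bound \eqref{probBoundEq2} of Theorem \ref{probBound2}. Since that inequality holds for every $\alpha\geq 1$ while the left-hand side $\Pm_{XYZ}(E)$ does not depend on $\alpha$, passing to the limit on the right-hand side is legitimate and the inequality is preserved.

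I would treat the two factors on the right-hand side separately. The first factor, $\mathbb{E}_Z[\esssup_{\Pm_{Y|Z}}\Pm_{X|Z}(E_{Z,Y})]^{(\alpha-1)/\alpha}$, is a number in $[0,1]$ raised to an exponent that tends to $1$; by continuity of $a\mapsto a^{t}$ on $[0,1]$ (with the convention $0^{1}=0$) it converges to $\mathbb{E}_Z[\esssup_{\Pm_{Y|Z}}\Pm_{X|Z}(E_{Z,Y})]$. For the second factor, I would invoke the limit $I_\alpha^{Y|Z}(X,Y|Z)\to\ml{X}{Y|Z}$ as $\alpha\to\infty$, which was stated in the paragraph preceding the corollary. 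Combined with $(\alpha-1)/\alpha\to 1$ this yields $\exp\!\left(\tfrac{\alpha-1}{\alpha}I_\alpha^{Y|Z}(X,Y|Z)\right)\to\exp(\ml{X}{Y|Z})$. Multiplying the two limits reproduces exactly the claimed inequality \eqref{probBoundLeakageEq}.

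The only obstacles are minor edge-case checks. If $\ml{X}{Y|Z}=+\infty$ then the bound is vacuous; if $\mathbb{E}_Z[\esssup_{\Pm_{Y|Z}}\Pm_{X|Z}(E_{Z,Y})]=0$ then by the tower property the integrand of $\Pm_{XYZ}(E)$ vanishes $\Pm_{Z}$-a.s., so the inequality again holds trivially. In every other case both factors are finite and strictly positive, so the pointwise limit argument above applies directly; no uniform convergence or monotone convergence machinery is required.
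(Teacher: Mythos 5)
Your proof is correct and matches the paper's intent exactly: the corollary is presented as the $\alpha\to\infty$ limit of Theorem~\ref{probBound2}, using the stated convergence $I_\alpha^{Y|Z}(X,Y|Z)\to\mathcal{L}(X\!\to\!Y|Z)$ together with $(\alpha-1)/\alpha\to 1$. Your edge-case checks are a sensible (if not strictly demanded) addition; no further changes are needed.
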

\subsection{$I^Z_\alpha(X,Y|Z)$}\label{sec-newCond}
As discussed in Section~\ref{Sec-Def-subsec-intro},
another natural candidate definition of conditional $\alpha$-mutual information is the following:
\begin{dfn}\label{conditionalSibs}
    Under the same assumptions of Definition \ref{otherCondIAlpha}:
    \begin{align}
        I^Z_\alpha(X,Y|Z) = \min_{\Q_Z} D_\alpha(\Pm_{XYZ}\|\Pm_{X|Z}\Pm_{Y|Z}\Q_Z) \label{condIAlphaDef}.
    \end{align}
\end{dfn}
To the best of our knowledge Definition \ref{conditionalSibs} has not been considered elsewhere.
As for $I_\alpha^{Y|Z}(X,Y|Z)$, it is possible to compute a closed-form  expression for $I_\alpha^Z(X,Y|Z)$. We will limit ourselves to discrete random variables for simplicity.
\begin{theorem}
Let $\alpha>0$ and $X,Y,Z$ be three discrete random variables.
\begin{align*}
    &I^Z_\alpha(X,Y|Z)= \frac{\alpha}{\alpha-1}\log\sum_z \Pm_Z(z) \notag \\ &\cdot\left(\sum_{x,y}  \Pm_{XY|Z=z}(x,y)^\alpha(\Pm_{X|Z=z}(x)\Pm_{Y|Z=z}(y))^{1-\alpha}\right)^\frac{1}{\alpha}.
\end{align*}
\end{theorem}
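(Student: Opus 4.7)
The plan is to unfold the definition of $D_\alpha$ and isolate the dependence on $\Q_Z$, reducing the problem to a simple one-variable optimization that already arose in the standard derivation of unconditional $I_\alpha$.

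First I would write out, in the discrete case,
\begin{align*}
D_\alpha(\Pm_{XYZ}\|\Pm_{X|Z}\Pm_{Y|Z}\Q_Z) = \frac{1}{\alpha-1}\log\sum_{x,y,z}\Pm_{XYZ}(x,y,z)^\alpha\bigl(\Pm_{X|Z=z}(x)\Pm_{Y|Z=z}(y)\Q_Z(z)\bigr)^{1-\alpha}
\end{align*}
and factor $\Pm_{XYZ}(x,y,z)=\Pm_{XY|Z=z}(x,y)\Pm_Z(z)$. Grouping the $z$-dependent terms together, the argument of the logarithm becomes $\sum_z \Pm_Z(z)^\alpha\Q_Z(z)^{1-\alpha}f(z)$, where
\begin{align*}
f(z) = \sum_{x,y}\Pm_{XY|Z=z}(x,y)^\alpha\bigl(\Pm_{X|Z=z}(x)\Pm_{Y|Z=z}(y)\bigr)^{1-\alpha}
\end{align*}
depends only on the joint through the conditional given $Z=z$, and is independent of $\Q_Z$.

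The remaining problem is then to optimize $\sum_z a(z)\Q_Z(z)^{1-\alpha}$ over probability distributions $\Q_Z$, with $a(z):=\Pm_Z(z)^\alpha f(z)$. For $\alpha>1$ this is a minimization and for $0<\alpha<1$ a maximization, but in both regimes the same elementary Hölder-type inequality (with conjugate exponents $\alpha$ and $\alpha/(\alpha-1)$) yields the extremum value $\bigl(\sum_z a(z)^{1/\alpha}\bigr)^\alpha$, attained by $\Q^\star_Z(z)\propto a(z)^{1/\alpha}=\Pm_Z(z)f(z)^{1/\alpha}$; equivalently, one can apply Lagrange multipliers and check that the unique critical point is an extremum by concavity/convexity of $t\mapsto t^{1-\alpha}$ on the simplex. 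Substituting back and using $a(z)^{1/\alpha}=\Pm_Z(z)f(z)^{1/\alpha}$ gives
\begin{align*}
\min_{\Q_Z}\sum_z a(z)\Q_Z(z)^{1-\alpha}=\Bigl(\sum_z \Pm_Z(z)f(z)^{1/\alpha}\Bigr)^\alpha,
\end{align*}
so that after the prefactor $\frac{1}{\alpha-1}\log(\cdot)$ one obtains exactly the claimed closed form.

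The conceptually straightforward but slightly delicate point is keeping the signs and the direction of the optimization straight across the two regimes $\alpha>1$ and $\alpha<1$, since the sign of $\alpha-1$ in front of the logarithm flips the optimization sense while Hölder's inequality also inverts for $\alpha<1$; the two inversions cancel, which is why one closed-form covers both cases. A minor bookkeeping step is ensuring that the absolute continuity implicit in $\Pm_{XYZ}\ll\Pm_{X|Z}\Pm_{Y|Z}\Q_Z^\star$ holds for the minimizing $\Q_Z^\star$, which follows from $\Q^\star_Z(z)=0$ only where $\Pm_Z(z)f(z)^{1/\alpha}=0$, hence only where $\Pm_{XYZ}$ already vanishes on the $z$-slice.
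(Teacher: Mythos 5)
Your derivation is correct and is essentially the paper's argument made explicit: the paper simply cites Sibson's identity, and your Hölder/Lagrange optimization of $\sum_z a(z)\Q_Z(z)^{1-\alpha}$ over the simplex, with optimizer $\Q_Z^\star(z)\propto \Pm_Z(z)f(z)^{1/\alpha}$ and value $\bigl(\sum_z a(z)^{1/\alpha}\bigr)^{\alpha}$, is exactly the content of that identity in this setting. Your handling of the sign flips between $\alpha>1$ and $\alpha<1$ and of the support of $\Q_Z^\star$ is also sound.
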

The proof follows from the definition of $I^Z_\alpha(X,Y|Z)$ and Sibson's identity \cite[Eq. (12)]{opMeanRDiv1}.
Mirroring Section \ref{sec-otherCond} we can state an analogous of Theorem~\ref{probBound2} for $I^Z_\alpha$:
\begin{theorem}\label{probBound}
Let $(\X\times\Y\times \Z,\F,\Pm_{XYZ})$ be a probability space.
Let $\Pm_{Y|Z}$ and $\Pm_{X|Z}$ be the induced conditional distributions.
Assume that $\Pm_{XYZ}\ll\Pm_Z\Pm_{Y|Z}\Pm_{X|Z}.$
Given $E\in \F$ and $z\in\Z$, let $E_z = \{(x,y) : (x,y,z)\in E\}$. Then, fixed $\alpha\geq1$:
\begin{align}
    \Pm_{XYZ}(E) \leq &\esssup_{\Pm_Z}\left(\Pm_{X|Z}\Pm_{Y|Z}(E_Z)\right)^{\frac{\alpha-1}{\alpha}}\notag\\ & \cdot \exp\left(\frac{\alpha-1}{\alpha}I_\alpha^Z(X,Y|Z)\right).\label{probBoundEq}
\end{align}
\end{theorem}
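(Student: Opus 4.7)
The plan is to follow the same template used in the proof of Theorem~\ref{probBound2}, but re-allocate the Hölder exponents so that the essential supremum collects the outer $Z$-integration instead of the $Y|Z$ one. In the three-fold Hölder decomposition of that proof, this corresponds to choosing the outer pair as $(\alpha'',\gamma'')=(\infty,1)$ (which reduces the $Z$-integral on the indicator side to $\esssup_{\Pm_Z}$) while keeping the $(X|Z,Y|Z)$ pair at $(\alpha,\alpha/(\alpha-1))$, so that the resulting $\Pm_Z$-average of the $\alpha$-norm of the Radon--Nikodym derivative can be identified with $I_\alpha^Z(X,Y|Z)$ via the closed-form expression given above.

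Concretely, by the absolute-continuity hypothesis I would first write
\begin{equation*}
\Pm_{XYZ}(E)=\E_{\Pm_Z\Pm_{Y|Z}\Pm_{X|Z}}\!\left[\frac{d\Pm_{XYZ}}{d\Pm_Z\Pm_{Y|Z}\Pm_{X|Z}}\,\mathbbm{1}_E\right].
\end{equation*}
Fixing $z$ and applying Hölder's inequality to the inner $\E_{\Pm_{Y|Z=z}\Pm_{X|Z=z}}[\cdot]$ with exponents $\alpha$ and $\alpha/(\alpha-1)$ splits the integrand into the $\alpha$-th power of the density and the indicator (whose $\alpha/(\alpha-1)$-power equals itself); the indicator side yields exactly $\left(\Pm_{X|Z}\Pm_{Y|Z}(E_Z)\right)^{(\alpha-1)/\alpha}$. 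Taking the outer $\Pm_Z$-expectation and invoking the degenerate $L^1$--$L^\infty$ Hölder inequality $\E_{\Pm_Z}[h(Z)k(Z)]\le \esssup_{\Pm_Z}k(Z)\cdot\E_{\Pm_Z}[h(Z)]$ pulls the $(\Pm_{X|Z}\Pm_{Y|Z}(E_Z))^{(\alpha-1)/\alpha}$ factor outside as the desired essential supremum. What remains is to recognise that
\begin{equation*}
\E_{\Pm_Z}\!\left[\E_{\Pm_{Y|Z}\Pm_{X|Z}}\!\left[\left(\tfrac{d\Pm_{XYZ}}{d\Pm_Z\Pm_{Y|Z}\Pm_{X|Z}}\right)^{\!\alpha}\right]^{1/\alpha}\right]=\exp\!\left(\tfrac{\alpha-1}{\alpha}I_\alpha^Z(X,Y|Z)\right),
\end{equation*}
which follows immediately from the closed-form expression just established for $I_\alpha^Z$.

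The main obstacle is purely bookkeeping: picking the order and values of the Hölder exponents so that the esssup is routed to the $Z$-marginal rather than to $Y|Z$ (as in Theorem~\ref{probBound2}) or to $X|Z$. Once that choice is correctly made, everything else reduces to the same mechanical steps as in the proof of Theorem~\ref{probBound2}, with the identification of the $\alpha$-norm term handled by the Sibson-type identity underlying the closed form of $I_\alpha^Z$.
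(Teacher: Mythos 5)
Your proof is correct and is exactly the argument the paper intends (the paper gives no separate proof, saying only that the result follows by mirroring the H\"older-based proof of Theorem~\ref{probBound2}): a single H\"older application at the $(X,Y)|Z$ level with exponents $\bigl(\alpha,\tfrac{\alpha}{\alpha-1}\bigr)$, followed by the degenerate $L^1$--$L^\infty$ step at the $Z$ level, and your identification of the remaining $\Pm_Z$-average of the $\alpha$-norm with $\exp\bigl(\tfrac{\alpha-1}{\alpha}I_\alpha^Z(X,Y|Z)\bigr)$ checks out against the stated closed form. One small notational slip: in your first paragraph the choice $(\alpha'',\gamma'')=(\infty,1)$ is swapped relative to the paper's convention (there the $\alpha$-parameters attach to the density factor, so routing the essential supremum to the indicator side requires $\alpha''=1$, $\gamma''=\infty$); your detailed derivation in the second paragraph performs the routing correctly.
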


This type of result is useful as it allows us to approximate the probability of $E$ under a joint, with the probability of $E$ under a different measure encoding some independence (typically easier to analyse) --- in this specific case, the measure induced by a Markov chain. Such bounds represent, for us, the main application-oriented employment of these measures \cite{fullVersionGeneralization}.
Notice that, other than using $I_\alpha^Z$ instead of $I_\alpha^{Y|Z}$, Theorem~\ref{probBound} involves a different essential supremum as compared to Theorem~\ref{probBound2}.
Moving on with the comparison, we have that differently from Definition~\ref{otherCondIAlpha}, the information measure we are defining here is symmetric. Moreover, setting $Z$ to a constant in Definition~\ref{conditionalSibs} does not allow us to retrieve $I_\alpha(X,Y)$, but rather $D_\alpha(\Pm_{XY}\|\Pm_X\Pm_Y)$. 
\subsection{An additive SDPI-like inequality}
Definition \ref{conditionalSibs} shares some interesting properties with $I_\alpha(X,Y)$. One such property is a rewriting of $I_\alpha(X,Y)$ in terms of $D_\alpha$. This allows us to leverage the strong data processing inequality (SDPI) for Hellinger integrals of order $\alpha$, which in turn allows us to provide an SDPI-like results for $I_\alpha^{Z}$. A definition for SDPIs can be found at \cite[Def 3.1]{sdpiRaginsky}

More precisely, we can write 
\begin{align}
   &I^Z_\alpha(X,Y|Z) \notag \\ &=\frac{\alpha}{\alpha-1}\log\mathbb{E}_Z\!\left[\exp\!\left(\!\frac{\alpha-1}{\alpha}D_\alpha(\Pm_{XY|Z}\|\Pm_{X|Z}\Pm_{Y|Z})\!\right)\!\right]\notag\\
    &=\frac{\alpha}{\alpha-1}\log\mathbb{E}_Z\!\!\left[\left(D_{f_\alpha}(\Pm_{XY|Z}\|\Pm_{X|Z}\Pm_{Y|Z})\right)^{1/\alpha}\right], \label{lmgfRepresentation}
\end{align}
where $D_{f_\alpha}$ denotes the Hellinger integral of order $\alpha$, \textit{i.e.,} given two measures $\Pm,\Q$, $D_{f_\alpha}(\Pm\|\Q)=\mathbb{E}_{\Q}\left[\left(\frac{d\Pm}{d\Q}\right)^{\alpha}\right]$.
Leveraging Eq. \eqref{lmgfRepresentation} we can state the following.
\begin{theorem}
\label{sdpiCondIAlpha}
Let $\alpha>1$ and $X,Y,W,Z$ be four random variables such that $(Z,W)-X-Y$ is a Markov chain:
\begin{equation}
    I^Z_\alpha(W,Y|Z) \leq \frac{1}{\alpha-1}\log\left(\eta_{f_\alpha}(\Pm_{Y|X})\right) + I^Z_\alpha(W,X|Z),
\end{equation}
where we denote with $\eta_{f_\alpha}(\Pm_{Y|X})$ the contraction parameter of the Hellinger integral of order $\alpha$, i.e., for a given Markov Kernel $K$, $\eta_{f_\alpha}(K) = \sup_{\mu, \nu\neq \mu} \frac{D_{f_\alpha}(K\mu\|K\nu)}{D_{f_\alpha}(\mu\|\nu)}$ \cite[Def. III.1]{sdpiRaginsky}.
\end{theorem}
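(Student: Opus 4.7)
The plan is to start from the LMGF-like representation in Eq.~\eqref{lmgfRepresentation}, which writes
\[
I^Z_\alpha(W,Y|Z) = \tfrac{\alpha}{\alpha-1}\log \mathbb{E}_Z\bigl[(D_{f_\alpha}(\Pm_{WY|Z}\|\Pm_{W|Z}\Pm_{Y|Z}))^{1/\alpha}\bigr]
\]
and analogously for $I^Z_\alpha(W,X|Z)$ with $X$ in place of $Y$. Since $x\mapsto x^{1/\alpha}$, integration against $\Pm_Z$, and $\log$ are all monotone, it suffices to establish the pointwise-in-$z$ inequality
\[
D_{f_\alpha}(\Pm_{WY|Z=z}\|\Pm_{W|Z=z}\Pm_{Y|Z=z}) \le \eta_{f_\alpha}(\Pm_{Y|X})\,D_{f_\alpha}(\Pm_{WX|Z=z}\|\Pm_{W|Z=z}\Pm_{X|Z=z}).
\]
Taking the $1/\alpha$-th power, then $\mathbb{E}_Z$, and finally $\tfrac{\alpha}{\alpha-1}\log(\cdot)$ produces exactly the extra term $\tfrac{1}{\alpha-1}\log\eta_{f_\alpha}(\Pm_{Y|X})$, since $(1/\alpha)\cdot \alpha/(\alpha-1)=1/(\alpha-1)$.

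Next, I would decompose the Hellinger integral on the left along the common $W$-marginal. For fixed $z$ both $\Pm_{WY|Z=z}$ and $\Pm_{W|Z=z}\Pm_{Y|Z=z}$ project to $\Pm_{W|Z=z}$ on $W$, so a direct Radon--Nikodym computation yields the chain-rule-style identity
\[
D_{f_\alpha}(\Pm_{WY|Z=z}\|\Pm_{W|Z=z}\Pm_{Y|Z=z}) = \sum_w \Pm_{W|Z=z}(w)\,D_{f_\alpha}(\Pm_{Y|W=w,Z=z}\|\Pm_{Y|Z=z}),
\]
and the analogue with $X$ in place of $Y$. The pointwise bound is thereby reduced to the slice-wise inequality
\[
D_{f_\alpha}(\Pm_{Y|W=w,Z=z}\|\Pm_{Y|Z=z}) \le \eta_{f_\alpha}(\Pm_{Y|X})\,D_{f_\alpha}(\Pm_{X|W=w,Z=z}\|\Pm_{X|Z=z})
\]
for each pair $(w,z)$; multiplying by $\Pm_{W|Z=z}(w)$ and summing over $w$ then recovers the pointwise-in-$z$ bound of the first paragraph.

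The Markov assumption $(Z,W)-X-Y$ enters at precisely this slice-wise step: it yields $\Pm_{Y|X,W,Z}=\Pm_{Y|X}$, so that both $\Pm_{Y|W=w,Z=z}$ and $\Pm_{Y|Z=z}$ are push-forwards of $\Pm_{X|W=w,Z=z}$ and $\Pm_{X|Z=z}$, respectively, through the single Markov kernel $\Pm_{Y|X}$. The slice-wise inequality is then the defining SDPI property of the contraction coefficient $\eta_{f_\alpha}(\Pm_{Y|X})$, applied to this pair of input measures. Chaining the three displayed steps closes the argument.

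The main obstacle, and the step worth checking most carefully, is the chain-rule decomposition of $D_{f_\alpha}$ along a common marginal: in the discrete setting assumed by the preceding closed-form theorem it is a routine rearrangement of the defining sum, and in general it follows from disintegration together with the fact that the Radon--Nikodym derivative of the two measures depends only on the coordinate where they disagree. Once this identity is in hand, the proof reduces to a single invocation of SDPI inside a conditional expectation, with no tensorization statement on contraction coefficients required.
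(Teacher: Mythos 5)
Your argument is correct and is essentially the proof the paper intends: the paper only sketches it, saying the result "follows from Eq.~\eqref{lmgfRepresentation} and a reasoning similar to \cite[Lemma 3]{distrFuncComput} applied to $D_{f_\alpha}$ instead of the KL-divergence," and that reasoning is precisely your chain-rule decomposition of the Hellinger integral along the common $W$-marginal followed by a slice-wise application of the SDPI to the kernel $\Pm_{Y|X}$ acting on $\Pm_{X|W=w,Z=z}$ and $\Pm_{X|Z=z}$, with the exponent bookkeeping $(1/\alpha)\cdot\alpha/(\alpha-1)=1/(\alpha-1)$ exactly as you state. The only point to handle explicitly is the degenerate slice $\Pm_{X|W=w,Z=z}=\Pm_{X|Z=z}$, where the ratio defining $\eta_{f_\alpha}$ does not apply; this is an artifact of the paper's unnormalized definition of $\eta_{f_\alpha}$ rather than a flaw in your route.
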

The proof follows from Eq. \eqref{lmgfRepresentation} and a reasoning similar to \cite[Lemma 3]{distrFuncComput} but applied to the $D_{f_\alpha}$-divergence instead of the KL-divergence.
\begin{remark}
Notice that data processing inequalities are simply a consequence of the convexity of $f$\cite[Thm 4.2]{lectureNotesSDPI} and $f_\alpha(x) = x^\alpha$ is indeed convex. Hence, although the Hellinger integral is not normalised to be $0$ whenever the measures are the same, it does satisfy a DPI. 
 Moreover, the contraction parameter of a strong data-processing inequality is always less than or equal to $1$. Hence, $\log(\eta_{f_\alpha}(K)) \leq 0$.
\end{remark}

An analogous result of Theorem~\ref{sdpiCondIAlpha} for Definition~\ref{otherCondIAlpha} does not seem possible.
\begin{remark}
One can state a result similar to Theorem \ref{sdpiCondIAlpha} for unconditional $I_\alpha$. Specifically, we can write
$$I_\alpha(X,Y) = \frac{\alpha}{\alpha-1}\log\mathbb{E}_Y\left[D_{f_\alpha}^{1/\alpha}(\Pm_{X|Y}\|\Pm_X)\right].$$
Since $I_\alpha$ is an asymmetric quantity, we only get the SDPI-like result in one direction.
Namely, given the Markov chain
   $W-X-Y$ we can relate via SDPI $I_\alpha(W,Y)$ and $I_\alpha(X,Y)$ (but, for instance, not $I_\alpha(W,X)$ and $I_\alpha(X,Y)$), as follows:
    \begin{equation}
    I_\alpha(W,Y) \leq \frac{1}{\alpha-1}\log\left(\eta_{f_\alpha}(\Pm_{W|X})\right) + I_\alpha(X,Y). \label{sdpiIAlpha}
\end{equation}
\end{remark}

Theorem~\ref{sdpiCondIAlpha} and Eq. \eqref{sdpiIAlpha} represent a different from usual SDPI-like inequality. The reason for this is that the (function of the) $\eta$ parameter is added to the information measure, rather than multiplied. However, one of the main applications of (conditional and not) $I_\alpha$ in bounds requires the exponentiation of the quantity, which brings us back to a multiplicative form. To make this statement more precise, let us state the following:
\begin{corollary}\label{tighterProbBound}
Under the same assumptions of Theorem~\ref{sdpiCondIAlpha} we have that:
\begin{align*}
    \Pm_{WYZ}(E) \leq &\esssup_{\Pm_Z}(\Pm_{W|Z}\Pm_{Y|Z}(E_Z))^{\frac{\alpha-1}{\alpha}} \notag \\ &\cdot\left(\eta_{f_\alpha}(\Pm_{Y|X})\right)^{1/\alpha}\cdot\exp\left(\frac{\alpha-1}{\alpha}I_\alpha^Z(W,X|Z)\right).
\end{align*}
\end{corollary}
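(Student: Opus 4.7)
The plan is to chain the two results that immediately precede the corollary: Theorem~\ref{probBound} gives a probability bound in terms of $I_\alpha^Z$, and Theorem~\ref{sdpiCondIAlpha} provides an additive SDPI-type bound on $I_\alpha^Z(W,Y|Z)$ in terms of $I_\alpha^Z(W,X|Z)$ and the contraction coefficient. Composing the two should produce exactly the claimed inequality, with the $\eta_{f_\alpha}$ factor emerging from exponentiating the additive term.

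First, I would apply Theorem~\ref{probBound} with $W$ in the role of $X$, i.e., to the triple $(W,Y,Z)$ and the event $E$. This yields
\begin{equation*}
\Pm_{WYZ}(E) \leq \esssup_{\Pm_Z}\bigl(\Pm_{W|Z}\Pm_{Y|Z}(E_Z)\bigr)^{\frac{\alpha-1}{\alpha}} \exp\!\left(\tfrac{\alpha-1}{\alpha}I_\alpha^Z(W,Y|Z)\right).
\end{equation*}
The absolute continuity hypothesis needed here, namely $\Pm_{WYZ}\ll\Pm_Z\Pm_{W|Z}\Pm_{Y|Z}$, should be inherited from the Markov assumption $(Z,W)-X-Y$ together with the implicit assumption on $\Pm_{WXYZ}$ (I would check this carefully, since it is the only real subtlety in setting up the inequality).

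Next, I would invoke Theorem~\ref{sdpiCondIAlpha} to replace $I_\alpha^Z(W,Y|Z)$ by $\frac{1}{\alpha-1}\log\eta_{f_\alpha}(\Pm_{Y|X}) + I_\alpha^Z(W,X|Z)$ inside the exponential. Since the map $t\mapsto\exp(\frac{\alpha-1}{\alpha}t)$ is monotone increasing for $\alpha>1$, the bound is preserved. A direct calculation then gives
\begin{equation*}
\exp\!\left(\tfrac{\alpha-1}{\alpha}I_\alpha^Z(W,Y|Z)\right) \leq \bigl(\eta_{f_\alpha}(\Pm_{Y|X})\bigr)^{1/\alpha} \exp\!\left(\tfrac{\alpha-1}{\alpha}I_\alpha^Z(W,X|Z)\right),
\end{equation*}
since $\frac{\alpha-1}{\alpha}\cdot\frac{1}{\alpha-1}=\frac{1}{\alpha}$. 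Plugging this back into the bound from the previous paragraph yields the stated inequality.

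The whole argument is essentially a two-line composition of known facts, so there is no genuinely hard step; the only points requiring care are (i) confirming that the absolute continuity prerequisite of Theorem~\ref{probBound} is implied by the Markov chain hypothesis of Theorem~\ref{sdpiCondIAlpha}, and (ii) making sure the exponent bookkeeping $\frac{\alpha-1}{\alpha}\cdot\frac{1}{\alpha-1}=\frac{1}{\alpha}$ is applied correctly so that the contraction coefficient enters with exponent $1/\alpha$ as claimed. If either of these were to fail, one would need to add a nondegeneracy hypothesis to the corollary, but as stated the composition goes through verbatim.
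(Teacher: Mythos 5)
Your proof is correct and matches the paper's own justification, which states that the corollary follows directly from Theorem~\ref{probBound} applied to $(W,Y,Z)$ combined with Theorem~\ref{sdpiCondIAlpha}, with the exponent bookkeeping $\frac{\alpha-1}{\alpha}\cdot\frac{1}{\alpha-1}=\frac{1}{\alpha}$ producing the $\left(\eta_{f_\alpha}(\Pm_{Y|X})\right)^{1/\alpha}$ factor exactly as you describe.
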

Corollary~\ref{tighterProbBound} follows directly from Theorem~\ref{probBound} and Theorem~\ref{sdpiCondIAlpha}.
\begin{remark}A similar result can be derived for unconditional $I_\alpha$ starting from \eqref{sdpiIAlpha} and \cite[Corollary 1]{fullVersionGeneralization}.\end{remark} 

\subsection{Discussion on $I_\alpha^Z$ and $I_\alpha^{Y|Z}$}\label{sec-discussion}
Let us now use Theorems \ref{probBound2} and \ref{probBound} as a means of comparison for the two conditional $I_\alpha$. These results are useful whenever we want to control the joint measure of some event $E$ but we only know how to control it (e.g., via an upper-bound) under some hypothesis of independence \cite{fullVersionGeneralization}. Consider the factorisation of $\Pm_{XYZ}$ under $X-Z-Y$ to be fixed. In the context of Theorem \ref{probBound2} and \ref{probBound},
according to the measure we know how to control, different conditional $I_\alpha$'s will appear on the right-hand side of the bound (c.f., Eq. \eqref{probBoundEq2}, \eqref{probBoundLeakageEq} and \eqref{probBoundEq}). For instance, if we assume to be able to control $\esssup_{\Q_Z}\left(\Pm_{X|Z}\Pm_{Y|Z}(E_Z)\right)$ then, Theorem \ref{probBound} tells us that $I_\alpha^Z(X,Y|Z)$ is the measure to study. If we assume instead that we are able to control terms of the form $\mathbb{E}_{P_Z}[\esssup_{\Pm_{Y|Z}}\Pm_{X|Z}(E_{Z,Y})]$ then $I_\alpha^{Y|Z}(X,Y|Z)$ would be the measure to analyse. (Quantities like $\mathbb{E}_{P_Z}[\esssup_{\Pm_{Y|Z}}\Pm_{X|Z}(E_{Z,Y})]$, for specific choices of $E$, are known in the literature as \enquote{small-ball probabilities} and have found applications in distributed estimation problems and distributed function computation \cite{bayesRiskRaginsky,distrFuncComput}).
More generally, we can find a duality between the measure over which we supremise (on the right-hand side of the bounds) and the corresponding minimisation in the definition of conditional $I_\alpha$.
The same measures also have a fundamental role in defining the hypothesis testing problem that endows the information measure with its operational meaning, as we will see in the next section. 
\section{Operational Meaning}\label{sec-opMeaning}
Drawing inspiration from \cite{tomamichel, lapidothTesting, lapidothTestingJournal}, let us consider the following composite hypothesis testing problem. Fix a pmf $\Pm_{XYZ}$, observing a sequence of triples  $\{(X_i,Y_i,Z_i)\}_{i=1}^n$ we want to decide whether:
\begin{enumerate}
    \setcounter{enumi}{-1}
    \item $\{(X_i,Y_i,Z_i)\}_{i=1}^n$ is sampled in an iid fashion from $\Pm_{XYZ}$ (null hypothesis);
    \item $\{(X_i,Y_i,Z_i)\}_{i=1}^n$ is sampled in an iid fashion from $\Q_Z\Pm_{X|Z}\Pm_{Y|Z}$, where $\Q_Z$ is an arbitrary pmf over the space $\Z$ (alternative hypothesis).
 \end{enumerate}
 We can relate $I_\alpha^Z(X,Y|Z)$ to the error-exponent of the just defined hypothesis testing problem. This can be seen as a more lenient test for markovity where the measure of $Z$ is allowed to vary. Similarly to before, there is a link between which measure is allowed to vary and the minimisation in the definition of conditional $I_\alpha$. Choosing, for instance, to minimise over 
 $\Q_X$ allows this measure to vary in the alternative hypothesis. Using Theorem \ref{probBound} we can already connect $I_\alpha^Z$ to the problem in question. Given a test $T_n : \{\X\times\Y\times\Z\}^n \to \{0,1\}$, we will denote with $p_n^1$ (Type-1 error) the probability of wrongfully choosing the hypothesis $1$ given that the sequence is distributed according to $\Pm_{XYZ}^{\otimes n}$, i.e. $p_n^1 = \Pm_{XYZ}^{\otimes n}(T_n(\{(X_i,Y_i,Z_i)\}_{i=1}^n) = 1)$ and with $p_n^2$ (Type-2 error) the maximum probability of wrongfully choosing the hypothesis $0$ given that the sequence is distributed according to $(\Q_Z\Pm_{X|Z}\Pm_{Y|Z})^{\otimes n}$ for some $\Q_Z$, i.e. $p_n^2 = \sup_{\Q_Z\in \mathcal{P}(\Z)} (\Q_Z\Pm_{X|Z}\Pm_{Y|Z})^{\otimes n}(T_n(\{(X_i,Y_i,Z_i)\}_{i=1}^n) = 0)$.
 \begin{theorem}\label{type1ErroBound} Let $n>0$ and $T_n : \{\X\times\Y\times\Z\}^n \to \{0,1\}$ be a deterministic test, that upon observing the sequence $\{(X_i,Y_i,Z_i)\}_{i=1}^n$ chooses either the null or the alternative hypothesis.
 Assume that $\exists R>0:$ $\forall \Q_Z \in \mathcal{Q}(\Z)$ we have $ (\Q_Z\Pm_{X|Z}\Pm_{Y|Z})^{\otimes n}(T_n(\{(X_i,Y_i,Z_i)\}_{i=1}^n) = 0) \leq \exp(-nR)$. Let also $\alpha\geq1$,
 \begin{align}
     1- p^1_n \leq \exp\left(-\frac{\alpha-1}{\alpha}n(R-I_\alpha^Z(X,Y|Z))\right).
 \end{align}
 \end{theorem}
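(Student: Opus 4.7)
My plan is to apply H\"older's inequality directly to $\Pm_{XYZ}^{\otimes n}(\{T_n=0\})$ against an iid reference measure drawn from the alternative-hypothesis family, and then to tensorize the resulting R\'enyi divergence. I would not invoke Theorem \ref{probBound} directly on the $n$-fold product, because its $\esssup$ over $\Pm_Z^{\otimes n}$ would ask for a pointwise bound on arbitrary sequences $z^n$, which is stronger than what the hypothesis actually provides; instead I would reproduce its H\"older step while matching directly the iid structure of the alternative.

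First, I would observe that $1-p_n^1 = \Pm_{XYZ}^{\otimes n}(E)$ with $E=\{T_n=0\}$. For any fixed $\Q_Z\in\mathcal{P}(\Z)$, setting $\mu_n=(\Q_Z\Pm_{X|Z}\Pm_{Y|Z})^{\otimes n}$ and assuming absolute continuity (otherwise $D_\alpha = \infty$ and the statement is vacuous), H\"older's inequality with conjugate exponents $\alpha$ and $\alpha/(\alpha-1)$ gives
\[
\Pm_{XYZ}^{\otimes n}(E)\leq \exp\!\left(\tfrac{\alpha-1}{\alpha}D_\alpha(\Pm_{XYZ}^{\otimes n}\|\mu_n)\right)\mu_n(E)^{(\alpha-1)/\alpha}.
\]
The assumption of the theorem bounds the second factor by $\exp(-\tfrac{\alpha-1}{\alpha}nR)$, uniformly in $\Q_Z$.

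Finally, I would invoke the additivity of R\'enyi divergence on product measures, namely $D_\alpha(\Pm_{XYZ}^{\otimes n}\|\mu_n)=nD_\alpha(\Pm_{XYZ}\|\Q_Z\Pm_{X|Z}\Pm_{Y|Z})$, and then take the infimum over $\Q_Z$, which by Definition \ref{conditionalSibs} equals $nI_\alpha^Z(X,Y|Z)$. Combining the two exponentials yields exactly the claimed bound. The main conceptual point that needs care is that the infimum in $I_\alpha^Z$ runs over single-letter $\Q_Z$, matching precisely the iid family in the alternative hypothesis --- this matching is what makes the tensorization step produce exactly $nI_\alpha^Z$ rather than a larger object optimised over arbitrary $\Q_{Z^n}$. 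The case $\alpha=1$ is degenerate: the exponent collapses and the inequality reduces to $1-p_n^1\leq 1$, so no separate boundary argument is needed.
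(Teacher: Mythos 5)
Your proof is correct, and it takes a genuinely different route from the paper's. The paper applies Theorem~\ref{probBound} directly to the $n$-fold product and is then left with the term $\esssup_{\Pm_Z^{\otimes n}}\bigl(\Pm^n_{X|Z}\Pm^n_{Y|Z}(E^n_Z)\bigr)$, which it bounds by $\exp(-nR)$ via a measure $\tilde{\Q}_Z$ concentrated on the maximising sequence $z^n$; it then invokes the tensorization $I_\alpha^Z(X^n,Y^n|Z^n)=nI_\alpha^Z(X,Y|Z)$. You instead run the H\"older step once per fixed single-letter $\Q_Z$, against the product reference $(\Q_Z\Pm_{X|Z}\Pm_{Y|Z})^{\otimes n}$, bound $\mu_n(E)^{(\alpha-1)/\alpha}$ directly from the hypothesis, use additivity of $D_\alpha$ on products, and only then infimise over $\Q_Z$. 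Your stated reason for the detour is well taken: a point mass on an arbitrary sequence $z^n$ is not of the form $\Q_Z^{\otimes n}$ (iid alternatives can only concentrate on constant sequences), so the paper's reduction of the essential supremum to the uniform-in-$\Q_Z$ assumption is the delicate step of its argument, and your version sidesteps it entirely while using exactly the strength of the hypothesis as stated. The price is negligible --- you re-derive the H\"older bound rather than quoting Theorem~\ref{probBound} --- and your observation that the single-letter infimum in Definition~\ref{conditionalSibs} matches the iid alternative class, so that the tensorized infimum is exactly $nI_\alpha^Z(X,Y|Z)$, is the right thing to check. The handling of $\alpha=1$ and of the non--absolutely-continuous case (where $D_\alpha=\infty$ makes the per-$\Q_Z$ bound vacuous before infimising) is also fine.
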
 
 \begin{proof}
 We have that $1-p_n^1 = \Pm_{XYZ}^{\otimes n}(T_n(\{(X_i,Y_i,Z_i)\}_{i=1}^n) = 0)$. Starting from Theorem \ref{probBound}:
 \begin{align}
     1-p_n^1 \leq &\esssup_{\Pm^n_Z}\left(\Pm^n_{X|Z}\Pm^n_{Y|Z}(E^n_Z)\right)^{1/\gamma} \notag\\ & \cdot\exp\left(\frac{\alpha-1}{\alpha}I_\alpha^Z(X^n,Y^n|Z^n)\right)\label{thmErrorExp}.
 \end{align}
 Since we assumed the exponential decay of $(\Q_Z\Pm_{X|Z}\Pm_{Y|Z})^{\otimes n}(T_n(\{(X_i,Y_i,Z_i)\}_{i=1}^n) = 0)$ for every $\Q_Z$ we also have that $\esssup_{\Pm^n_Z}\left(\Pm^n_{X|Z}\Pm^n_{Y|Z}(E^n_Z)\right) \leq \exp(-nR)$ (consider a measure $\tilde{Q}_Z$ that puts all the mass on the sequence achieving the essential supremum in \eqref{thmErrorExp}). Given the assumption of independence on the triples $\{(X_i,Y_i,Z_i)\}_{i=1}^n$ and following a reasoning similar to the one in Eqn. (49) in \cite{verduAlpha}, we have that $I_\alpha^Z(X^n,Y^n|Z^n)=nI_\alpha^Z(X,Y|Z)$. The conclusions then follow from algebraic manipulations of \eqref{thmErrorExp}.
 \end{proof}
 This result implies that if we assume an exponential decay for the type-2 error $p_n^2$ and $R>I_\alpha^Z(X,Y|Z)$ we have an exponential decay of the probability of correctly choosing the null hypothesis as well. Moreover, for every $n>0$:
 \begin{equation}
     \frac1n\log(1-p_n^1) \leq -\frac{\alpha-1}{\alpha}\left(R- I_\alpha^Z(X,Y|Z)\right).
 \end{equation}
 We can conclude that:
 \begin{equation}
     \limsup_{n\to\infty}\frac1n\log(1-p_n^1) \leq -\!\!\!\!\sup_{\alpha\in(1,+\infty]} \!\!\frac{\alpha-1}{\alpha}\left(R- I_\alpha^Z(X,Y|Z)\right).\notag
 \end{equation}
\subsection{Error exponents}
Following the approach undertaken in \cite{lapidothTesting} we can also define an achievable error-exponent pair for the hypothesis testing problem in question.
\begin{dfn}
A pair of error exponents $(E_P,E_Q) \in \mathbb{R}^2$ is called achievable w.r.t the above hypothesis testing problem  if there exists a series of tests $\{T_n\}_{n=1}^\infty$ such that \footnote{ As pointed out in \cite{lapidothTesting}, despite having bounds like in Theorem \ref{type1ErroBound} decaying with two rates $E_P,E_Q$, we cannot conclude anything on the achievability of the pair.}:
\begin{align*}
    &\liminf_{n \rightarrow \infty} -\frac{1}{n}\log \Pm_{XYZ}^{\otimes n}(T_n(\{(X_i,Y_i,Z_i)\}_{i=1}^n) = 1) > E_P,\\
    &\liminf_{n \rightarrow \infty} \inf_{Q_Z} -\frac{1}{n} \\ &\log (\Q_Z\Pm_{X|Z}\Pm_{Y|Z})^{\otimes n}(T_n(\{(X_i,Y_i,Z_i)\}_{i=1}^n) = 0) > E_Q.
\end{align*}
We can then define the error exponent functions \cite{lapidothTesting} $E_P:\mathbb{R}\to \mathbb{R}\cup \{+\infty\}$ and $E_Q:\mathbb{R}\to \mathbb{R}\cup \{+\infty\}$ as follows:
\begin{align}
    & E_P(E_Q) = \sup\{E_P\in \mathbb{R} : (E_P,E_Q) \text{ is achievable}\} \\
    & E_Q(E_P) = \sup\{E_Q\in \mathbb{R} : (E_P,E_Q) \text{ is achievable}\}
\end{align}
\end{dfn}
It is now possible to relate $I_\alpha^Z(X,Y|Z)$, where $\alpha \in (0,1]$, with both the Fenchel conjugate of $E_P(\cdot)$, $E_P^{\star}(\cdot)$ and $E_P^{\star\star}(\cdot)$.
 First, let us characterise $E_P^\star(E_Q)$.
\begin{lemma}\label{conjugateEP}
\begin{equation}
    E_P^\star(\lambda) = 
    \left\{
             \begin{array}{lr}
             +\infty, &\text{if } \lambda>0   \\
             \lambda I_{\frac{1}{1-\lambda}}(X,Y|Z), & \text{otherwise}.\\
             \end{array}
\right.
\end{equation}
\end{lemma}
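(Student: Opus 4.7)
My plan is to reduce the computation of $E_P^\star(\lambda)$ to a Hoeffding--Blahut--Csisz\'ar characterisation of the error-exponent function for composite hypothesis testing, combined with the classical Gibbs variational identity for R\'enyi divergence. First I would establish (via empirical-type tests and Sanov-type bounds that are uniform in $\Q_Z$, in the spirit of~\cite{tomamichel,lapidothTesting}) the representation
\begin{equation*}
    E_P(E_Q) = \inf\Bigl\{ D(R\|\Pm_{XYZ}) \,:\, \inf_{\Q_Z} D(R\|\Q_Z\Pm_{X|Z}\Pm_{Y|Z}) \le E_Q \Bigr\},
\end{equation*}
the outer infimum being over joint laws $R$ on $\X\times\Y\times\Z$.

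The case $\lambda>0$ is then immediate: picking $R=\Pm_{XYZ}$ gives $D(R\|\Pm_{XYZ})=0$ and reduces the inner constraint to $I(X;Y|Z)\le E_Q$, so $E_P(E_Q)=0$ for all $E_Q\ge I(X;Y|Z)$; hence $\lambda E_Q - E_P(E_Q)=\lambda E_Q\to+\infty$, giving $E_P^\star(\lambda)=+\infty$.

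For $\lambda\le 0$, set $\alpha=1/(1-\lambda)\in(0,1]$ so that $\lambda=(\alpha-1)/\alpha$. Substituting the above representation into the conjugate and exchanging the outer infima, the optimal $E_Q$ lies at the boundary $E_Q=\inf_{\Q_Z} D(R\|\Q_Z\Pm_{X|Z}\Pm_{Y|Z})$ because the objective is linear in $E_Q$ with non-positive slope; since $\lambda\le 0$ also converts the inner infimum into a supremum, one obtains
\begin{equation*}
    E_P^\star(\lambda)=\sup_{\Q_Z}\sup_R\bigl[\lambda D(R\|\Q_Z\Pm_{X|Z}\Pm_{Y|Z})-D(R\|\Pm_{XYZ})\bigr].
\end{equation*}
The inner supremum is a classical Gibbs-type identity: using
\begin{equation*}
    \min_R\!\bigl[\alpha D(R\|\Pm)+(1-\alpha) D(R\|\Q)\bigr]=(1-\alpha)D_\alpha(\Pm\|\Q),\quad \alpha\in(0,1],
\end{equation*}
with optimal tilted $R^\star\propto \Pm^\alpha\Q^{1-\alpha}$, one gets $\sup_R[\lambda D(R\|\Q)-D(R\|\Pm)]=\lambda D_\alpha(\Pm\|\Q)$. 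Plugging in and applying one last sign flip on the outer $\sup_{\Q_Z}$ yields $E_P^\star(\lambda)=\lambda\inf_{\Q_Z}D_\alpha(\Pm_{XYZ}\|\Q_Z\Pm_{X|Z}\Pm_{Y|Z})=\lambda I^Z_\alpha(X,Y|Z)$, as claimed.

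The main obstacle is the composite Hoeffding representation itself: achievability calls for a test thresholding the empirical KL-divergence from the Markov set $\{\Q_Z\Pm_{X|Z}\Pm_{Y|Z}:\Q_Z\in\mathcal{P}(\Z)\}$, while the converse needs an exponential type-2 lower bound that is tight \emph{uniformly} over $\Q_Z$. Once this is in place, the manipulations collapsing inf and sup are routine, justified by monotonicity in $E_Q$ and by the fact that the map $\Q_Z\mapsto D(R\|\Q_Z\Pm_{X|Z}\Pm_{Y|Z})$ attains its infimum at the induced $Z$-marginal of $R$.
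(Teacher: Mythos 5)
Your proposal is correct and follows essentially the same route as the paper: both rest on the composite Hoeffding-type representation of $E_P(E_Q)$ (the analogue of \cite[Corollary 2]{lapidothTesting}), push $E_Q$ to the boundary using $\lambda\le 0$, and then invoke the Gibbs-type variational identity relating a weighted sum of KL-divergences to $D_\alpha$ (\cite[Theorem 3]{RenyiKLDiv}, via the analogue of \cite[Lemma 4]{lapidothTesting}) before minimising over $\Q_Z$. The only cosmetic difference is that you argue the $\lambda>0$ case directly from feasibility of $R=\Pm_{XYZ}$, where the paper defers to \cite[Lemma 12]{lapidothTesting}.
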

\begin{proof}
Assume $\lambda \leq 0$,
\begin{align*}
  &E_P^\star(\lambda)=\sup_{E_Q \in \mathbb{R}} \left[\lambda E_Q - E_P(E_Q)\right]\\
        &= \underset{E_Q \in  \mathbb{R} }{\sup}\left[\lambda E_Q - \!\!\!\!\underset{\substack{\mathcal{R}_{XYZ}: \\D(\mathcal{R}_{XYZ}||R_{Z}P_{X|Z}P_{Y|Z})\leq E_Q}}{\inf} \!\!\!\!D(\mathcal{R}_{XYZ}\|\Pm_{XYZ})\right] \\
        &\overset{\setlabel{step1}}{=} \underset{E_Q \in  \mathbb{R}}{\sup} \underset{\substack{\mathcal{R}_{XYZ}:\\D(\mathcal{R}_{XYZ}\|\mathcal{R}_{Z}\Pm_{X|Z}\Pm_{Y|Z})\leq E_Q}}{\sup} \!\!\!\!\left[\lambda E_Q -  D(\mathcal{R}_{XYZ}\|\Pm_{XYZ})\right] \\
        &= \underset{\mathcal{R}_{XYZ}}{\sup}\!\! \underset{\substack{E_Q \in \mathbb{R}:\\ E_Q \geq D(\mathcal{R}_{XYZ}\|\mathcal{R}_{Z}\Pm_{X|Z}\Pm_{Y|Z})}}{\sup}\!\!\!\!\left[\lambda E_Q -  D(\mathcal{R}_{XYZ}\|\Pm_{XYZ})\right] \\
       &\overset{\setlabel{step2}}{=} \underset{\mathcal{R}_{XYZ}}{\sup} \left[\lambda D(\mathcal{R}_{XYZ}\|\mathcal{R}_{Z}\Pm_{X|Z}\Pm_{Y|Z}) -  D(\mathcal{R}_{XYZ}\|\Pm_{XYZ})\right]\\
         &\overset{\setlabel{step3}}{=} (\lambda-1) \underset{\Q_Z}{\inf} \underset{\mathcal{R}_{XYZ}}{\inf}\big[\frac{-\lambda}{1-\lambda} D(\mathcal{R}_{XYZ}\|\Q_{Z}\Pm_{X|Z}\Pm_{Y|Z})\\
         &\,+\frac{1}{1-\lambda} D(\mathcal{R}_{XYZ}\|\Pm_{XYZ}) \big]\\
        &\overset{\setlabel{step4}}{=} \lambda\,\,\underset{\Q_Z}{\inf} D_{\frac{1}{1-\lambda}}(\Pm_{XYZ}\|\Q_{Z}\Pm_{X|Z}\Pm_{Y|Z}) \\
        &\overset{\setlabel{step5}}{=} \lambda I_{\frac{1}{1-\lambda}}(X,Y|Z).
\end{align*}
Where step \reflabel{step1} follows from an analogous result of \cite[Corollary 2]{lapidothTesting} for our testing problem, step \reflabel{step2} follows because, given that $\lambda \leq 0$ then $D(\mathcal{R}_{XYZ}\|\Q_{Z}\Pm_{X|Z}\Pm_{Y|Z})$ achieves the maximum. Step \reflabel{step3} follows from an analogous of \cite[Lemma 4]{lapidothTesting}, \reflabel{step4} follows from \cite[Theorem 3]{RenyiKLDiv} and to conclude \reflabel{step5} follows from Definition \ref{conditionalSibs}.
For $\lambda>0$ the reasoning is identical to \cite[Lemma 12]{lapidothTesting}.
\end{proof}
Now, we can prove the connection to $E_P^{\star\star}(\cdot).$ \footnote{Notice that $E_P^{\star\star}(\cdot)$ is not guaranteed to be equal to $E_P(\cdot)$. Indeed, it is possible to find examples where the function is not convex and thus, all we retrieve is a lower bound on $E_P$ \cite[Example 14]{lapidothTesting}.
}
\begin{theorem}\label{biconjucate}
Given $E_Q,E_P\in\mathbb{R}$ 
\begin{align}
    E_P^{\star\star}(E_Q) &= \sup_{\alpha \in (0,1]} \frac{1-\alpha}{\alpha} (I_{\alpha}(X,Y|Z))-E_Q), \\
    E_Q^{\star\star}(E_P) &= \sup_{\alpha \in (0,1]}  \left(I_{\alpha}(X,Y|Z)-\frac{\alpha}{1-\alpha}E_P\right). \label{thmBiconjPart2}
\end{align}
\end{theorem}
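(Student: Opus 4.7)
The plan is to apply the definition of the double Fenchel conjugate $f^{\star\star}(x)=\sup_\lambda[\lambda x-f^\star(\lambda)]$ to each side, invoking Lemma \ref{conjugateEP} for $E_P^\star$ and a parallel conjugate formula for $E_Q^\star$. After substituting these expressions, both identities reduce to a single change of variables.

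For the first identity, by definition
$$E_P^{\star\star}(E_Q)=\sup_{\lambda\in\mathbb{R}}\bigl[\lambda E_Q-E_P^\star(\lambda)\bigr].$$
Lemma \ref{conjugateEP} forces the supremum to be restricted to $\lambda\leq 0$ (since $E_P^\star=+\infty$ for $\lambda>0$) and gives $E_P^\star(\lambda)=\lambda I_{1/(1-\lambda)}(X,Y|Z)$, so
$$E_P^{\star\star}(E_Q)=\sup_{\lambda\leq 0}\lambda\bigl(E_Q-I_{1/(1-\lambda)}(X,Y|Z)\bigr).$$
The substitution $\alpha=1/(1-\lambda)$ is a bijection from $(-\infty,0]$ onto $(0,1]$ with inverse $\lambda=(\alpha-1)/\alpha$, and simplifying $\lambda(E_Q-I_\alpha)=\tfrac{1-\alpha}{\alpha}(I_\alpha-E_Q)$ yields the first identity.

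For the second identity, I first establish, mirroring the proof of Lemma \ref{conjugateEP}, that $E_Q^\star(\mu)=+\infty$ for $\mu>0$ and $E_Q^\star(\mu)=-I_{-\mu/(1-\mu)}(X,Y|Z)$ for $\mu\leq 0$. The derivation follows the same template: characterize $E_Q(E_P)$ via the analogue of \cite[Corollary 2]{lapidothTesting} as the infimum of $D(\mathcal{R}_{XYZ}\|\mathcal{R}_Z\Pm_{X|Z}\Pm_{Y|Z})$ subject to $D(\mathcal{R}_{XYZ}\|\Pm_{XYZ})\leq E_P$; combine the suprema in $E_P$ and in $\mathcal{R}_{XYZ}$, and use $\mu\leq 0$ to collapse the constraint $E_P\geq D(\mathcal{R}_{XYZ}\|\Pm_{XYZ})$ at its left endpoint; invoke $D(\mathcal{R}_{XYZ}\|\mathcal{R}_Z\Pm_{X|Z}\Pm_{Y|Z})=\inf_{\Q_Z}D(\mathcal{R}_{XYZ}\|\Q_Z\Pm_{X|Z}\Pm_{Y|Z})$ to expose a weighted sum of KL divergences; and apply \cite[Theorem 3]{RenyiKLDiv} to recognize a R\'enyi divergence of order $-\mu/(1-\mu)$, whose infimum over $\Q_Z$ is $I_{-\mu/(1-\mu)}^Z(X,Y|Z)$ by Definition \ref{conditionalSibs}. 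Then
$$E_Q^{\star\star}(E_P)=\sup_{\mu\leq 0}\bigl[\mu E_P+I_{-\mu/(1-\mu)}(X,Y|Z)\bigr],$$
and the substitution $\alpha=-\mu/(1-\mu)$ (which maps $\mu\in(-\infty,0]$ onto $\alpha\in[0,1)$, with $\mu=-\alpha/(1-\alpha)$) gives \eqref{thmBiconjPart2}, up to the usual continuity conventions at the boundary values $\alpha\in\{0,1\}$.

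The main obstacle is the auxiliary derivation of $E_Q^\star$: the algebraic skeleton is identical to Lemma \ref{conjugateEP}, but care is needed when transcribing \cite[Corollary 2]{lapidothTesting} to the present testing problem (with a variable $\Q_Z$ in the alternative hypothesis) and when tracking the signs as the roles of $\mathcal{R}_Z\Pm_{X|Z}\Pm_{Y|Z}$ and $\Pm_{XYZ}$ are swapped in the weighted KL identity relative to the proof already in the paper. Once the conjugate of $E_Q$ is in hand, both identities follow from elementary changes of variables.
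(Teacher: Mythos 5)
Your proof of the first identity is exactly the paper's argument (restrict to $\lambda\leq 0$ since $E_P^\star(\lambda)=+\infty$ otherwise, substitute Lemma~\ref{conjugateEP}, and reparametrize $\alpha=1/(1-\lambda)$); you even carry the factor $\lambda$ in $E_P^\star(\lambda)=\lambda I_{1/(1-\lambda)}(X,Y|Z)$ correctly through the intermediate step, where the paper's displayed chain drops it (a typo, since the paper's final line is consistent with keeping it). For the second identity the paper only says ``similar arguments,'' and your derivation of $E_Q^\star(\mu)=-I^Z_{-\mu/(1-\mu)}(X,Y|Z)$ for $\mu\leq 0$ is the right way to fill this in: the signs work out because $-\inf_{\Q_Z}$ becomes a supremum that recombines with the variational formula to give $-(1-\mu)(1-\alpha)D_\alpha=-D_\alpha$, and then $-\inf_{\Q_Z}D_\alpha=-I_\alpha^Z$. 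The one point worth recording is that your substitution $\alpha=-\mu/(1-\mu)$ maps $\mu\in(-\infty,0]$ onto $\alpha\in[0,1)$ rather than the $(0,1]$ appearing in \eqref{thmBiconjPart2}; you flag this, and it is arguably an imprecision in the theorem statement rather than in your argument.
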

\begin{proof}
\begin{align}
     E_P^{\star\star}(E_Q) & = \sup_{\lambda \in \mathbb{R}} (\lambda E_Q - E_P^\star(\lambda))\\
     & \overset{\setlabel{step1Proof2}}{=}  \sup_{\lambda \leq 0} (\lambda E_Q - E_P^\star(\lambda))\\
     & \overset{\setlabel{step2Proof2}}{=}  \sup_{\lambda \leq 0} (\lambda E_Q - I_{\frac{1}{1-\lambda}}(X,Y|Z))\\
     & \overset{\setlabel{step3Proof2}}{=} \sup_{\alpha \in (0,1]} \frac{1-\alpha}{\alpha} (I_{\alpha}(X,Y|Z)-E_Q).
\end{align}
Where \reflabel{step1Proof2} follows from $E_P^\star(\lambda)=+\infty$ for $\lambda>0$, \reflabel{step2Proof2} follows from Lemma \ref{conjugateEP} and \reflabel{step3Proof2} by setting $\alpha = \frac{1}{1-\lambda}$. The proof of \eqref{thmBiconjPart2} follows from similar arguments.
\end{proof}
\section{Conclusions}
We have considered the problem of defining a conditional version of Sibson's $\alpha$-Mutual Information. Drawing inspiration from an equivalent formulation of $I_\alpha(X,Y)$ as $\min_{\Q_Y}D_\alpha(\Pm_{XY}\|\Pm_X\Q_Y)$ we saw how several of these propositions can be made for a $I_\alpha(X,Y|Z)$. Two have already been analysed in \cite{tomamichel}. We proposed here a general approach that allows to connect to each such measure: \begin{enumerate}
    \item a bound, allowing to approximate the probability $\Pm_{XYZ}(E)$ with the probabilty of $E$ under a product distribution induced by the Markov chain $X-Z-Y$;
    \item an operational meaning as the error exponent of a hypothesis testing problem where the alternative hypothesis is a markov-like distribution and some measures are allowed to vary. 
\end{enumerate}   
A simple relationship between the hypothesis testing problem and the information measure can already be found  using the bound described in 1), without requiring any extra machinery. To conclude,
the usefulness of a measure clearly comes from its applications and ease of computability. While the latter remains the same for all the possible conditional $I_\alpha$ the former can vary according to the definition. With this in mind, the various definitions are equally meaningful and it seems reasonable to use the conditional $I_\alpha$ that best suits the specific application at hand.
\section*{Acknowledgment} The work in this paper was supported in part by the Swiss National Science Foundation under Grants 169294 and 200364.
\bibliographystyle{IEEEtran}
\bibliography{sample}
\end{document}